\documentclass[aps,prr,twocolumn,floats,final,superscriptaddress,citeautoscript,longbibliography]{revtex4-2}
\usepackage{array}
\usepackage{amsmath,amssymb, amsthm, amsmath, bm, braket}
\usepackage{graphicx}
\usepackage{multirow}
\usepackage[colorlinks=true,linkcolor=blue,citecolor=blue,urlcolor=blue]{hyperref}
\usepackage[svgnames]{xcolor}
\usepackage{tikz}
\usepackage{titlesec}
\newtheorem{theorem}{Theorem}
\newcommand{\ketbra}[2]{\mbox{$| #1 \rangle \langle #2 |$}}     
\newcommand{\proj}[1]{\mbox{$|#1\rangle \langle #1 |$}}       

\begin{document}

\title{Bounding Entanglement Entropy Using Zeros of Local Correlation Matrices}
\author{Zhiyuan Yao}
\affiliation{Institute for Advanced Study, Tsinghua University, Beijing 100084, China}
\author{Lei Pan}
\affiliation{Institute for Advanced Study, Tsinghua University, Beijing 100084, China}
\author{Shang Liu}
\email{sliu.phys@gmail.com}
\affiliation{Kavli Institute for Theoretical Physics, University of California, Santa Barbara, California 93106, USA}
\author{Pengfei Zhang}
\email{PengfeiZhang.physics@gmail.com}
\affiliation{Institute for Quantum Information and Matter and Walter Burke Institute for Theoretical Physics, California Institute of Technology, Pasadena, California 91125, USA}

\date{\today}

\begin{abstract}

Correlation functions and entanglement are two different aspects to characterize quantum many-body states. While many correlation functions are experimentally accessible, entanglement entropy (EE), the simplest characterization of quantum entanglement, is usually difficult to measure. In this Letter, we propose a protocol to bound EE by local measurements. This protocol utilizes local correlation matrices and focuses on their (approximate) zero eigenvalues. Given a quantum state, each (approximate) zero eigenvalue can be used to define a set of local projection operators. An auxiliary Hamiltonian can then be constructed by summing these projectors.
When the construction only involves projectors of zero eigenvalues, we prove the EE of a subsystem is bounded by the ground-state degeneracy of the auxiliary Hamiltonian on this subsystem.
When projectors from nonzero eigenvalues are included, we show the EE can be bounded by a thermal entropy of the subsystem. Our protocol can be applied experimentally to investigate exotic quantum many-body states prepared in quantum simulators.

\end{abstract}

\maketitle

\textit{\color{blue}Introduction.--} Entanglement not only plays an essential role in quantum information science but also becomes more and more important in modern studies of quantum matters \cite{Amico:2008RMP,Laflorencie:2016EE,Eisert:2010ca}. Entanglement entropy (EE) can characterize quantum thermalization \cite{Abanin:2019cm}, describe quantum critical \cite{Vidal:2003ei,Sachdev:2008qm,HOLZHEY1994443,Calabrese2004,Calabrese2009} and topologically ordered states \cite{Kitaev:2006PRL_EE,Levin:2006PRL_EE}, and bridge quantum correlation and geometric metric in holographic quantum matter \cite{Nishioka:2009he,Zaanen:Book,Hartnoll:Book}. On the other hand, conventional characterizations of quantum matters are based on correlation functions and physical observables \cite{Chaikin:2000Book}. It is, therefore, natural to seek connections between EE and correlation functions or physical observables.

There are several well-established connections. It is known that the mutual information can be used to bound correlation functions \cite{Wolf:2008PRL,Xiaoliang_talk}. The scrambling of entanglement is also directly related to the out-of-time-ordered correlator \cite{Fan:2017OTOC,Swingle:2016PRA,Chen:2016ul,Chen:2017OTOC,Huang:2017OTOC,He:2017PRB}. Experimentally, the R\'enyi entropy can be determined by measuring the shift operator between multiple copies of the target quantum state \cite{Ekert:2002de,Alves:2004me,Daley:2012me,Islam:2015cm,Kaufman:2016qt} or averaging over randomized measurements on single copies of a given quantum state  \cite{Enk:2012hr,Elben:2018re,Vermersch:2018un,Elben:2019sc,Brydges:2020pr,Satzinger:2021Science}.  Nevertheless, these protocols usually suffer from large statistical fluctuations as the subsystem size increases. Hence, experimental measurements of EEs so far are limited to small subsystem sizes.

In this Letter, we bring out a connection between the bound of the EE and the (approximate) zeros in the correlation matrix. The correlation matrix is first introduced \cite{Qi:2019CorrMat} in the Hamiltonian reconstruction problem of ``recovering" the Hamiltonian from one of its excited states \cite{Garrison:2018da}. Here we introduce a slightly generalized \textit{local} version of the correlation matrix and use it to bound the EE of a state. The local correlation matrix is Hermitian and positive semidefinite with non-negative eigenvalues. We focus on (approximate) zeros among all eigenvalues of the correlation matrix. Each (approximate) zero gives rise to a set of projection operators $\hat{P}_i$ which (approximately) satisfies $\hat{P}_i \ket{\psi} = 0$ and constrains $\ket{\psi}$.
To quantify the extent of these constraints and give an upper bound of the EE, we introduce a positive semidefinite auxiliary Hamiltonian as the sum of these projectors multiplied by positive prefactors. When only projectors of zero eigenvalues are included, we prove that the von Neumann EE is upper bounded by the logarithm of the ground-state degeneracy of the auxiliary Hamiltonian constructed on the subsystem.
When projectors of approximate zeros are included, the EE of a subsystem can be bounded by the thermal entropy at a temperature determined by the subsystem energy of the state.

\begin{figure}[t]
\centering
\includegraphics[width=\linewidth]{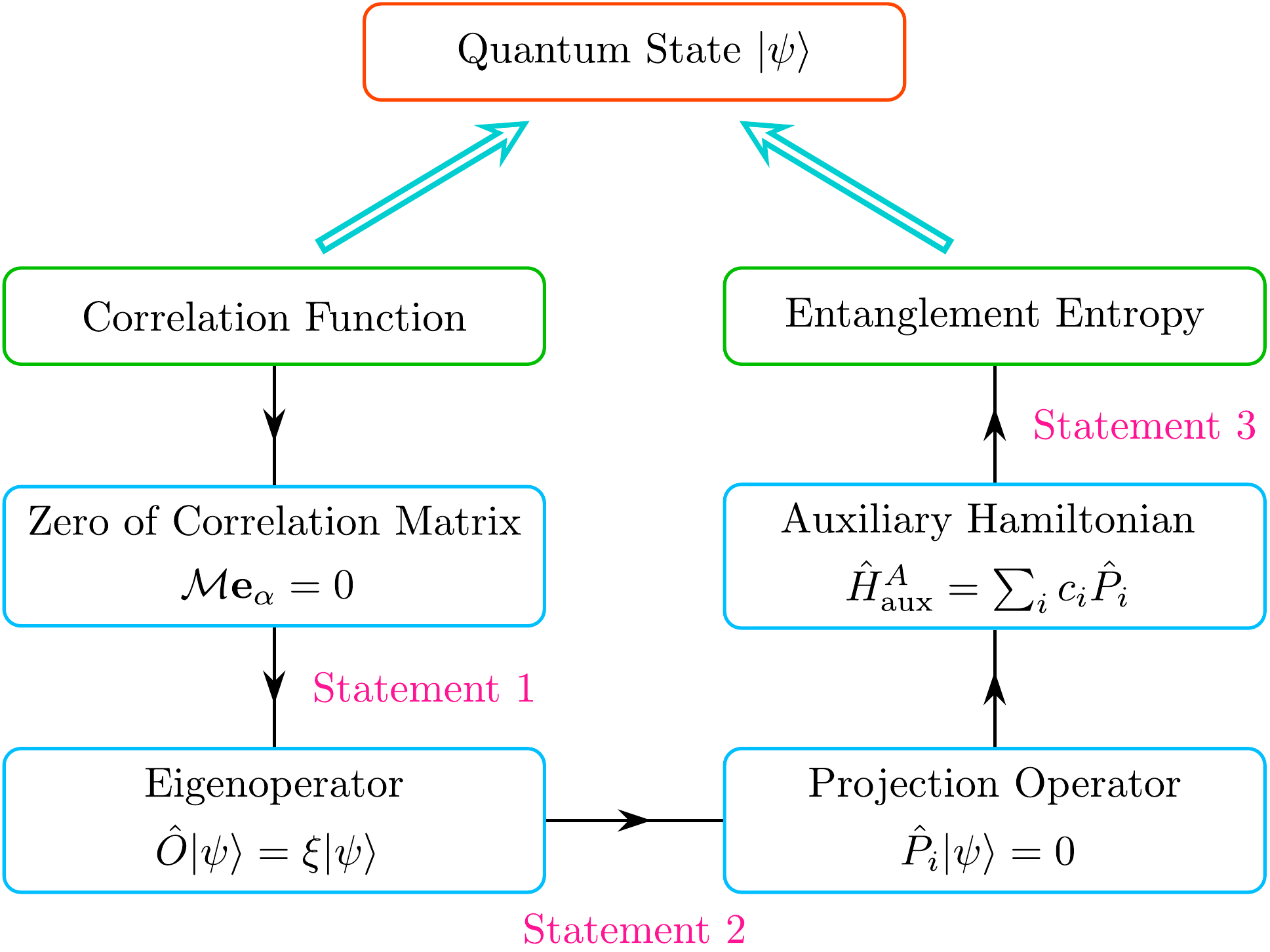}
\caption{Correlation functions and entanglement entropy are two different tools to characterize a quantum many-body state. This work establishes a connection between correlation and entanglement, based on the three statements discussed in the text.}
\label{Schematic}
\end{figure}

\textit{\color{blue}Protocol.--} We first present our protocol, schematically shown in Fig. \ref{Schematic}, of including only projectors from zero eigenvalues in the auxiliary Hamiltonian. It follows step-wisely from the following three statements and connects correlation functions and EE. It is schematically shown in Fig. \ref{Schematic} and is discussed in detail below.

\textbf{Statement 1: from Correlation Matrix to Eigenoperators.} We consider a continuous region of
$k$ sites and a set of linearly-independent $k$-local operators $\{\hat{L}_i\}$ acting only in this region. For a given quantum many-body state $|\psi\rangle$, we introduce its hermitian \textit{correlation matrix} $\mathcal{M}$ with element $\mathcal{M}_{ij}$ defined as
\begin{equation}
\mathcal{M}_{ij}=\langle \psi|\hat{L}^\dag_i\hat{L}_j|\psi\rangle-\langle \psi|\hat{L}^\dag_i|\psi\rangle\langle \psi|\hat{L}_j|\psi\rangle.
\end{equation}
Denote the eigenvalues of $\mathcal{M}$ as $\lambda_\alpha$ and the corresponding eigenvectors as $\bf{e}_\alpha$. For each zero eigenvalue $\lambda_{\alpha}=0$, if exists, one can use $\mathbf{e}_{\alpha}$ to construct an \textit{eigenoperator} $\hat{O}$ of $\ket{\psi}$ where $\hat{O} \ket{\psi} = \xi \ket{\psi}$ for some constant $\xi$.

\begin{proof}
Let us denote ${\bf e}_\alpha=(w_1,w_2,\dots, w_n)$ and construct an operator $\hat{O}=\sum\limits_{i}w_i\hat{L}_i$. Then, if $\lambda_\alpha = 0$ we have
\begin{equation}
\langle\psi|\hat{O}^\dag\hat{O}|\psi\rangle-\langle\psi|\hat{O}^\dag|\psi\rangle\langle\psi|\hat{O}|\psi\rangle=\sum\limits_{ij}w_i^*\mathcal{M}_{ij}w_j=0. \label{psiOO}
\end{equation}
On the other hand, let $\hat{O}|\psi\rangle=\xi|\psi\rangle+|\psi_\perp\rangle$ with $\langle\psi_\perp|\psi\rangle=0$, the above relation translates to
\begin{equation}
0 = \langle\psi|\hat{O}^\dag\hat{O}|\psi\rangle-\langle\psi|\hat{O}^\dag|\psi\rangle\langle\psi|\hat{O}|\psi\rangle=\langle\psi_\perp|\psi_\perp\rangle \, ,
\end{equation}
meaning $\hat{O}|\psi\rangle=\xi|\psi\rangle$. Clearly, the number of independent eigenoperators $\hat{O}$ one can construct in this way equals the geometric/algebraic multiplicity of eigenvalue zero.
\end{proof}

\textbf{Statement 2: from Eigenoperators to Auxiliary Hamiltonian.} From each eigenoperator $\hat{O}$, one can construct a set of projection operators $\{\hat{P}_i\}$ each of which annihilates the state, $\hat{P}_i|\psi\rangle=0$.

\begin{proof}
Let us denote $\hat{\Delta}$ the positive semidefinite operator defined as ($\hat{I}$ being the identity operator)
\begin{equation}\label{eq:Delta}
	\hat{\Delta}=(\hat{O}^\dag-\xi^{*}\hat{I})(\hat{O}-\xi\hat{I}).
\end{equation}
and $\ket{i}$ the basis that diagonalizes $\hat{\Delta}$,
\begin{equation} \label{eq:DeltaP}
	\hat{\Delta}=\sum\limits_{i}\alpha_i|i\rangle\langle i| \equiv \sum\limits_{i}\alpha_i \hat{P}_i,
\end{equation}
where $\alpha_i > 0$. Since $\hat{\Delta}$ annihilates the state and $0 = \bra{\psi} \hat{\Delta} \ket{\psi} = \sum_{i} \alpha_i \bra{\psi} \hat{P}_{i} \ket{\psi} = \sum_{i} \alpha_i \bra{\psi} \hat{P}_{i}^{2} \ket{\psi} = \sum_{i} \alpha_i ||\hat{P}_i\ket{\psi}||^2$, it follows that each $\hat{P}_{i}$ annihilates the state, $\hat{P}_i|\psi\rangle=0$. Moreover, the projectors $\hat{P}_{i}$ are $k$-local operators since $\hat{O}$ is $k$-local.
\end{proof}

Intuitively, the existence of a set of such projectors $\hat{P}_{i}$ means the wave function $\ket{\psi}$ tends to have a simple form and be low entangled. This is in contrast to a generic volume-law eigenstate where such local projectors $\hat{P}_{i}$ do not exist as implied by the uniqueness of the Hamiltonian reconstruction process \cite{Qi:2019CorrMat}. This also means we should find as many such projectors as we can if we want to give a good upper bound on the EE of $\ket{\psi}$ of a subsystem $A$.
To this end, we first exhaust all continuous regions of $k$ sites in $A$ and collect all $k$-local projection operators $\hat{P}_i$ constructed above \footnote{Refer to Appendix A for a visual demonstration.}. We then construct a positive semidefinite auxiliary Hamiltonian on $A$ by summing these projectors multiplied by positive numbers,
\begin{equation}\label{eq:Haux}
    \hat{H}^A_\text{aux} = \sum\limits_{\alpha} \Tilde{c}_\alpha \hat{\Delta}_\alpha = \sum\limits_{i}c_i\hat{P}_i, \quad \Tilde{c}_\alpha > 0, \, c_{i} > 0 \, ,
\end{equation}
where $\alpha$ enumerates $\hat{\Delta}_{\alpha}$ operators of zero eigenvalues of local correlation matrices on all $k$-site regions and $c_i = \sum_{\alpha} \tilde{c}_{\alpha} \alpha_i$ according to Eq.~\eqref{eq:DeltaP}. The introduction of $\hat{H}^A_\text{aux}$ will prove to be convenient for us to discuss the bound of EE.

\textbf{Statement 3: from Auxiliary Hamiltonian to the EE bound---the Ground-State Degeneracy and the Entanglement Entropy (GSD-EE) Theorem.} If $D^A$ denotes the ground-state degeneracy of $\hat{H}^A_\text{aux}$ under the open boundary condition (OBC), the von Neumann EE of $\ket{\psi}$ between this subsystem $A$ and the rest of the system $B$ is upper bounded by $\log D^A$.

\begin{proof}
Because $\hat{P}_i|\psi\rangle=0$ and $\hat{P}_i$ only acts on the subsystem $A$, the reduced density matrix $\hat{\rho}_A=\text{Tr}_\text{B} |\psi\rangle\langle\psi|$ satisfies
\begin{equation}
\text{Tr}_\text{A} (\hat{P}_i\hat{\rho}_A) = \text{Tr}_\text{A}(\text{Tr}_\text{B}\hat{P}_i|\psi\rangle\langle\psi|)=0. \label{TrAPrho}
\end{equation}
Diagonalizing $\hat{\rho}_{A}$ with a set of orthonormal basis $\{\ket{\psi_{k}}\}$, $\hat{\rho}_A=\sum\limits_{k}p_k|\psi_{k}\rangle\langle \psi_{k}|$ where $p_{k} > 0$, we can cast the above relation into the following form
\begin{equation}
\text{Tr}_\text{A} (\hat{P}_i\hat{\rho}_A) = \sum\limits_{k}p_k ||\hat{P}_i|\psi_k\rangle||^2 \, .
\end{equation}
It becomes clear that $\hat{P}_i |\psi_k\rangle=0$ for every $\ket{\psi_{k}}$ and $\hat{P}_{i}$. Therefore, $\ket{\psi_{k}}$ is the ground state of $\hat{H}^{A}_{\text{aux}}$ with zero energy since $\hat{H}^{A}_{\text{aux}}$ is positive semidefinite. Consequently, the number of non-zero $p_k$ is no bigger than the ground-state degeneracy $D^A$ of the $\hat{H}^A_\text{aux}$, and the von Neumann EE is upper bounded by $\log D^A$.
\end{proof}

Bounding EE of a large subsystem $A$ requires the determination of $D^{A}$ on the same large subsystem $A$, and to this end we have specially designed an algorithm  \footnote{Refer to Appendix D for details.}. For a one-dimensional (1D) system, as a corollary of our GSD-EE theorem, if $D^A$ is upper bounded by a power-law of the subsystem size $\sim L_\text{A}^\alpha$, the EE is then upper bounded by $\sim \log L_\text{A}$, indicating the state is sub-volume entangled.
A few remarks are in order before we move on to specific examples. Firstly, the operator space that the set of eigenoperators $\{ \hat{O}_{\alpha} \}$ spans does not depend on the form of $\hat{L}_{i}$ as long as they span the same operator space $\mathcal{V}$. Secondly, the EE upper bound does not depend on the form of $\hat{O}_{\alpha}$, i.e., the choice of $\{\mathbf{e}_{\alpha}\}$, in the case of the zero eigenvalue of $\mathcal{M}$ being degenerate \footnote{see Appendix B for a proof}. Finally, larger $\mathcal{V}$ in general leads to a better upper bound with however, more numerical/experimental effort to obtain $\mathcal{M}$. Consequently, we should make a balance and choose a moderate number of operators. Nevertheless, using 1-local operator basis, our protocol already gives the perfect EE upper bound of zero for product states \footnote{This is proved in Appendix C.}.

\textit{\color{blue}Examples.--}
Below we shall apply the GSD-EE theorem to the scar states in the generalized Affleck--Kennedy--Lieb--Tasaki (AKLT) model and the ground state of the toric code model.

\textbf{Example 1: Scar states in the AKLT model.}
The 1D AKLT model and its generalizations are known to host a series of quantum many-body scar states described the spectrum generating algebra \cite{Moudgalya:2018AKLT,Mark:2020AKLT}. These scar states
admit the following analytical expression,
\begin{equation} \label{eq:AKLT_scars}
    \ket{\psi_{n}} = (\hat{Q}^{\dagger})^{n} \ket{G},
\end{equation}
where $\hat{Q}^{\dagger} = \sum_{i} (-1)^{i} (\hat{S}_{i}^{+})^{2}$ with $i$ being the site index and $\hat{S}_{i}^{+}$ being the spin raising operator at site $i$, and $\ket{G}$ is the ground state of the AKLT model under the periodic boundary condition (PBC).
Here the system size $L$ is assumed to be even, $n =0, 1, \cdots, L/2 -1$ for odd $L/2$, and $n =0, 1, \cdots, L/2$ for even $L/2$.

To construct the correlation matrices, we choose 3-local operators $\{\hat{L}_{i}\}$ that are constructed as tensor products of the Gell-Mann matrices,
\begin{equation}
    \hat{L}_{i} = \lambda_{a}^{(1)} \otimes \lambda_{b}^{(2)} \otimes \lambda_{c}^{(3)} \, ,
\end{equation}
where $a, b, c = 1,\cdots, 8$ are the indices of the eight standard Gell-Mann matrices and the superscripts $1, 2, 3$ denote three successive sites.
As an illustrative example, we choose the state $\ket{\psi_{1}} = \hat{Q}^{\dagger} \ket{G}$ and construct the auxiliary Hamiltonian, following the general protocol described above.
Since our target state is translational invariant up to a sign, we just need to construct the local correlation matrix of arbitrary three neighboring sites: all the other projectors entering the auxiliary Hamiltonian are obtained by simply translating the local projectors of this correlation matrix. It turns out that projectors constructed from $\ket{\psi_{1}}$ also annihilate all the scar states $\ket{\psi_{n}}$ for all system sizes $L>3$ \footnote{Refer to Appendix E for details.}. Thus the EE scaling behavior obtained for our target state $\ket{\psi_{1}}$ will also bound all other scar states $\ket{\psi_{n}}$.
Using our specially designed algorithm, we are able to determine the ground-state degeneracy $D^A$ of the auxiliary Hamiltonian up to system size $L_{\text{A}}$ of a few hundred.
We numerically find that $D^A=2L_\text{A}+2$, which means the EE at most scales with the logarithm of the subsystem size.

\begin{figure}[t]
\centering
\includegraphics[width=0.85\linewidth]{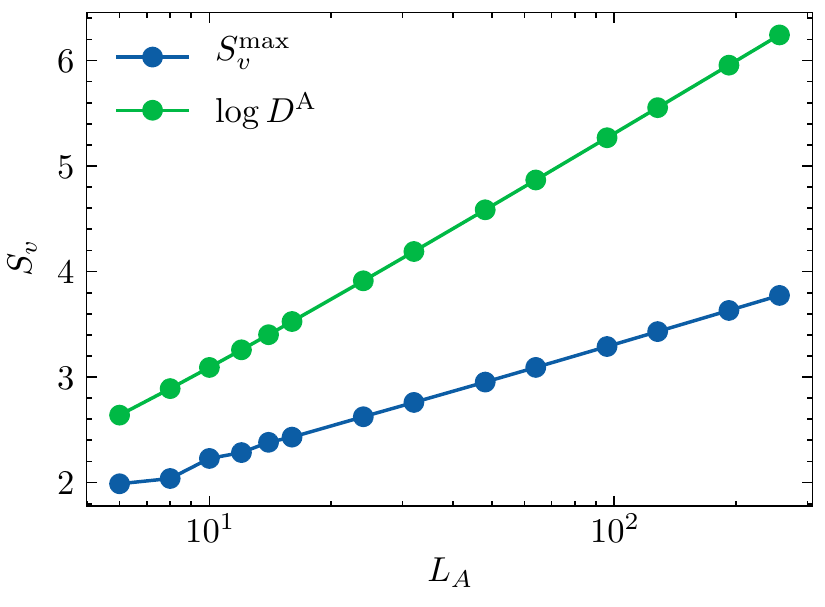}
\caption{Comparison between the upper bounds of the bipartite EE obtained by the ground-state degeneracy of the auxiliary Hamiltonian (green dots) and the computed maximum bipartite EEs of all the scar states (blue dots), Eq.~\eqref{eq:AKLT_scars}, in the AKLT model.}
\label{fig:AKLT}
\end{figure}

To benchmark the effectiveness of our EE bounds, we numerically calculate the exact values of the bipartite EEs of all the scar states. Since our result bounds all the $\ket{\psi_{n}}$, we compare our bound with the maximum value of the numerically calculated values. As shown in Fig.~\ref{fig:AKLT}, the numerical EE maximum is approaching $(\log L_{A})/2$, and our bound is asymptotically $\log L_{A}$ \footnote{Similar results are also obtained for scar states in the extended Fermi-Hubbard model, see Appendix F that also includes Refs. \cite{Mark:2020ep,Moudgalya:2020ep,Yang:1989PRL_Hubbard,Vafek:2017bv}.}. Further increasing the operator range to $k=4$ and $k=5$, we find that $D^{A}$ remains the same as that of $k=3$ in both cases and that our upper bound can not be improved anymore. A natural guess is that our EE bound using $k=3$ is already tight. The reason behind our conjecture is that our bound actually applies not only to pure states $(Q^\dagger)^n \ket{G}$, but also to linear superpositions of them and even to mixed states constructed from them.

\textbf{Example 2: Ground states of the toric code model.}
Our formalism can be easily applied to stabilizer states \cite{Gottesman:1997sc,Gottesman:1998to,Pachos:Book}. Let us take the ground state of the toric code model \cite{Kitaev:2006ai} as a concrete example. Consider a 2D square lattice with qubits living on \textit{links} and, say, with the PBC. The toric code model is defined by the following Hamiltonian,
\begin{equation}
    \hat{H}_{\rm TC}= -\sum_\text{vertices}
    \begin{tikzpicture}[baseline=-0.5ex, thick, bullet/.style={draw,circle,inner sep=1pt, fill=black}]
        \def\L{0.7}
        \draw (0, 0) -- (0, \L) node[pos=0.5, bullet] {} node[pos=0.5, left] {\small $X$};
        \draw (0, 0) -- (\L, 0) node[pos=0.5, bullet] {} node[pos=0.5, above] {\small $X$};
        \draw (0, 0) -- (0, -\L) node[pos=0.5, bullet] {} node[pos=0.5, right]{\small $X$};
        \draw (0, 0) -- (-\L, 0) node[pos=0.5, bullet] {} node[pos=0.5, below] {\small $X$};
    \end{tikzpicture}
    - \sum_\text{faces}
    \begin{tikzpicture}[baseline=-0.5ex, thick, bullet/.style={draw,circle,inner sep=1pt, fill=black}]
        \def\L{0.8}
        \draw (-\L/2, -\L/2) -- ++(\L, 0) node[pos=0.5, bullet] {} node[pos=0.5, below] {\small $Z$} -- ++(0, \L) node[pos=0.5, bullet] {} node[pos=0.5, right] {\small $Z$}
        -- ++(-\L, 0) node[pos=0.5, bullet] {} node[pos=0.5, above] {\small $Z$}
        -- ++(0, -\L) node[pos=0.5, bullet] {} node[pos=0.5, left] {\small $Z$};
    \end{tikzpicture}
    \, .
\end{equation}
Each term in the first (second) sum is a product of four Pauli-$X$ (Pauli-$Z$) operators around a vertex (face), and is dubbed a star (plaquette) term. All the terms in $H_{\rm TC}$ commute with each other, and the ground state of the Hamiltonian is a simultaneous eigenstate of all the star and plaquette terms with eigenvalue $+1$. Now consider a rectangular subsystem as shown in Fig.~\ref{fig:ToricCodeSubsystem}(a) or \ref{fig:ToricCodeSubsystem}(b), and let us try to evaluate the EE for such a ground state. Since each star or plaquette term is proportional to a projector up to a constant shift, we may skip the initial steps of our formalism and directly construct the auxiliary Hamiltonian $\hat{H}^A_{\rm aux}$ as the negative sum of all the star and plaquette terms that are completely inside $A$, and the EE is bounded from above by $\log D^A$. To compute $D^A$, notice that terms in $\hat{H}^A_{\rm aux}$ form an \textit{independent} set of stabilizers \cite{Gottesman:1997sc,Gottesman:1998to,Pachos:Book}, which means (i) they are proportional to tensor products of the identity operator $\hat{I}$ and Pauli operators $\hat{X},\hat{Y},\hat{Z}$, (ii) they square to the identity and mutually commute, and (iii) any product of a nonempty subset of those operators is \textit{not} proportional to the identity (independence). These three properties imply that specifying the eigenvalue of each stabilizer will reduce the Hilbert space dimension by a half. Therefore, $D^A=2^{n_A-n_s}$ where $n_A$ is the number of qubits in the subsystem, and $n_s$ is the number of stabilizers contained in $\hat{H}^A_{\rm aux}$. A simple counting shows $S_v\leq \log D^A=(|\partial A|-1)\log2$ where $|\partial A|$ is the perimeter of the rectangular. This upper bound turns out to be \textit{exact}. The first term proportional to $|\partial A|$ indicates the area law, and the subleading constant term is known as the topological entanglement entropy \cite{Kitaev:2006PRL_EE,Levin:2006PRL_EE} signifying topological order.
\begin{figure}[t]
	\centering
	\includegraphics[width=0.95\linewidth]{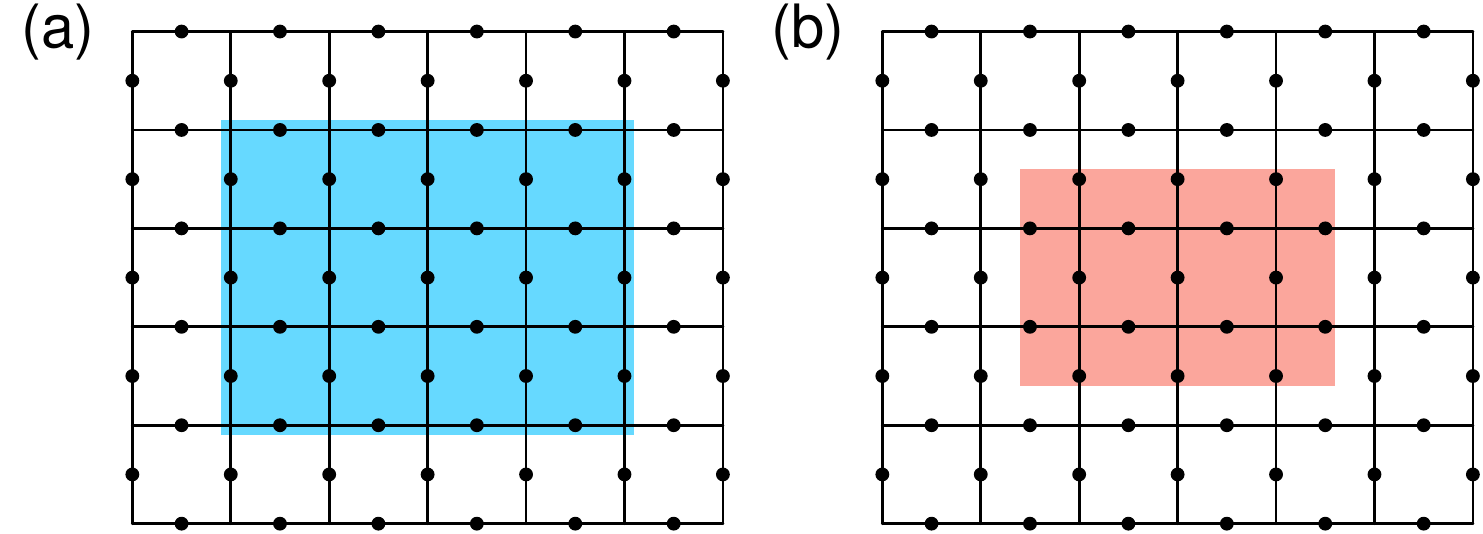}
	\caption{Two examples of a rectangular subsystem. The EE of the ground state of the toric code model takes the form $S_v=(|\partial A|-1)\log2$ in both cases where $|\partial A|$ is the perimeter of the shaded area.}
	\label{fig:ToricCodeSubsystem}
\end{figure}

\textit{\color{blue}Approximate Zeros.--} In previous discussions, we have demonstrated we can obtain useful upper bound on EE when the local correlation matrix contains exact zero eigenvalues. However, this protocol can easily fail. For one thing, the local correlation matrices of low entangled states, such as perturbed states of the ones considered in the previous section and many-body localized states, may contain only approximate zero eigenvalues.
For another, errors are inevitable in experimentally measured correlation matrices.
In this section, we extend previous results by providing an upper bound of EE when the correlation matrix contains approximate zero eigenvalues.

\begin{figure}[t]
	\centering
	\includegraphics[width=0.85\linewidth]{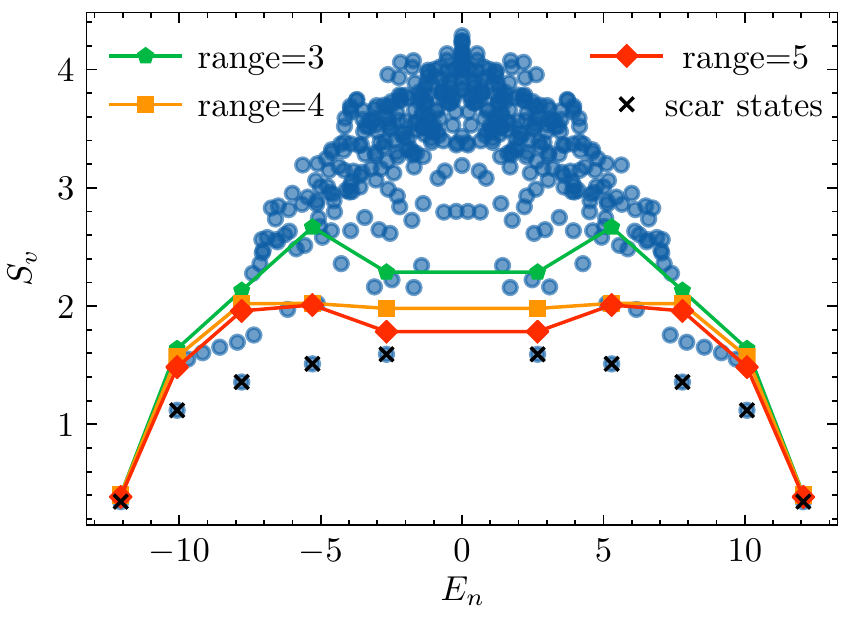}
	\caption{Comparision between our upper bounds (symbols shown in the legend) and the exact bipartite EEs of the scar states (black crosses) in the PXP model. The computation is perform for $L=20$ in the zero momentum and inversion even sector. For completeness, the exact EEs of all the energy eigenstates are shown. }
	\label{fig:PXP}
\end{figure}

Statements 1 and 2 require minor revision: instead of focusing only on zero eigenvalues, we also include eigenvalues $\lambda_\alpha \leq c$ with a finite cutoff $c$. Then following \eqref{eq:Delta} and \eqref{eq:DeltaP}, we can define an additional set of projection operators $\hat{P}_i$. Except for including these extra projectors, the form of the auxiliary Hamiltonian \eqref{eq:Haux} remains the same. Now, Statement 3 needs to be modified.

\textbf{Statement 3$'$: from Auxiliary Hamiltonian to the EE bound.} Given the auxiliary Hamiltonian $\hat{H}^A_\text{aux}$, we first compute its energy $E^A_\text{aux}$ of the state $|\psi\rangle$ as $E^A_\text{aux}=\text{Tr}_A (\hat{H}^A_\text{aux}\hat{\rho}_A) = \langle\psi|\hat{H}^A_\text{aux}|\psi\rangle$. We then consider the thermal reduced density matrix
\begin{equation}
    \hat{\rho}^A_\text{aux}(\beta)=\exp(-\beta \hat{H}^A_\text{aux})/\mathcal{Z}^A_\text{aux}(\beta).
\end{equation}
where $\mathcal{Z}^A_\text{aux}$ is the partition function and determine the inverse temperature $\beta^{*}$ by matching the energy $\text{Tr}_\text{A}(\hat{H}^A_\text{aux}\hat{\rho}^A_\text{aux}(\beta^*))=E^A_\text{aux}$. Since for all density matrices with the same energy, the thermal density matrix maximizes the entropy \cite{Sakurai:Book}, the von Neumann EE of $\ket{\psi}$ is upper bounded by the thermal entropy $S_{\text{th}}(\beta^*)=-\text{Tr}_\text{A}(\hat{\rho}^A_\text{aux}(\beta^*)\log \hat{\rho}^A_\text{aux}(\beta^*))$.

When we only include projectors of exact zeros in the auxiliary Hamiltonian, the statement 3' can be reduced to the statement 3. In this case, we have $E^A_\text{aux}=0$ and $\beta^*=\infty$, and consequently $S_{\text{th}}(\infty)=\log D^A$.

\textbf{Example 3: Scar states in the PXP model.}
We apply our extended protocol to the scar states in the PXP model under the PBC \cite{Bernien:2017pm,Turner:2018wg,Turner:2018qs}. We choose range-$k$ operator basis $\hat{L}_{i}$ as
\begin{equation}
    \hat{L}_{i} = \lambda_{a_1}^{(1)} \otimes \lambda_{a_2}^{(2)} \otimes...\otimes \lambda_{a_k}^{(k)} \, ,
\end{equation}
with $\{a_k\}=\{0,1,2,3\}$ being the indices of the identity matrix and Pauli matrices $\{\sigma_x,\sigma_y,\sigma_z\}$ (the identity operator where all $a_{k}$ are zero is excluded). In our numerical study, we choose the system size $L=20$ and $k \in \{3,4,5\}$. Due to the Hilbert space restriction, the correlation matrices for both scar states and thermal eigenstates have the same number of trivial zero eigenvalues, but those for scar states have more approximate zeros.  We only include the $\hat{\Delta}_{\alpha}$ operators with a certain cutoff $\lambda_{\alpha} < c$ in the auxiliary Hamiltonian, and at the same time enforce the Hilbert space restriction, effectively setting $\tilde{c}_{\alpha} = \infty$ for the $\hat{\Delta}_{\alpha}$ operators of trivial zeros. We then follow our protocol to compute the upper bounds \footnote{See Appendix G for details}.
As shown in Fig.~\ref{fig:PXP}, for $k=4,5$, the upper bounds of EE from the thermal entropy $S(\beta^*)$ lead to a well separation between thermal states and scar states.

\textit{\color{blue}Summary.--} In summary, we have developed a protocol that connects correlation and entanglement in a quantitative way, and tested it with exotic quantum states.
Our method can be readily applied to experiments as an economic way of bounding the entanglement entropy of a quantum state.
All one need to measure are local physical observables that give rise to the matrix elements of the local correlation matrices \footnote{See Appendix H for an experimental protocol in cold atom systems, which also includes Refs.~\cite{Bakr:2009aq,Sherson:2010sa,Weitenberg:2011ss,Cheuk:2015qg,Parsons:2015sr,Gross:2021qg,Wang:2015ca,Xia:2015rb,Wang:2016sq}}.
Different from methods that focus directly on the reduced density matrix \cite{Ohliger:2013ea,Lanyon:2017eo,Dalmonte:2018qs,Kokail:2021dc}, our method is divide-and-conquer in nature.
This means our measurement cost to bound the EE of a subsystem $A$ only scales \textit{linearly} with the subsystem size $L_{\mathrm{A}}$.
Moreover, for the exact zero case of eigenvalues of local correlation matrices, the numerical effort can be significantly reduced by our special algorithm. For the case of approximate zeros, however, using thermal entropy to bound entanglement entropy is costly and deserves further studies.
Still, in this Noisy Intermediate-Scale Quantum era \cite{Preskill:2018gt}, we envision our protocol to be particular useful.

\textit{Acknowledgements.} We especially thank Hui Zhai for many invaluable discussions. We are grateful to Xiao-Liang Qi for his interest and an insightful discussion. We also thank the referees for their many comments and suggestions that increase the clarity of this work. S. L. is supported by the Gordon and Betty Moore Foundation under Grant No. GBMF8690 and the National Science Foundation under Grant No. NSF PHY-1748958. P. Z. acknowledges support from the Walter Burke Institute for Theoretical Physics at Caltech.

\titleformat{\section}{\centering \bfseries}{Appendix \thesection:}{0.5em}{}
\appendix

\section{Illustration of Our Protocol of Exact Zeros Using $2$-local Operator Basis}
We shall further elucidate our protocol of bounding the entanglement entropy of a subsystem $A$ using $k=2$-local operator basis $\{\hat{L}_{i}\}$ here. To facilitate understanding, we use a concrete example depicted in Fig.~[\ref{fig:protocol}].
The subsystem $A$ is shaded red, say with sites $2, 3, 4, 5$, and of size $L_{A}=4$.
Our protocol is first to measure the correlation matrix on sites $2$ and $3$ (dotted blacks in the second row of Fig.~[\ref{fig:protocol}), collect the $\hat{\Delta}$ operators (equivalently projection operators) following statements 1 and 2 in the main text.
We then move to sites $3, 4$ to repeat the above procedure, and then to sites $4, 5$ for the same task.
After exhausting all these continuous $k=2$ subsystems inside $A$, we collect all those $\hat{\Delta}_{\alpha}$ to construct the auxiliary Hamiltonian $\hat{H}_{\text{aux}}^{A}$ following equation (6) in the main text.
The final step is to utilize the GSD-EE theorem to calculate the ground-state degeneracy $D^{A}$ that yields an upper bound of $\log D^{A}$ of the entanglement entropy of $A$.
\begin{figure}[h]
	\centering
	\includegraphics[width=.95\linewidth]{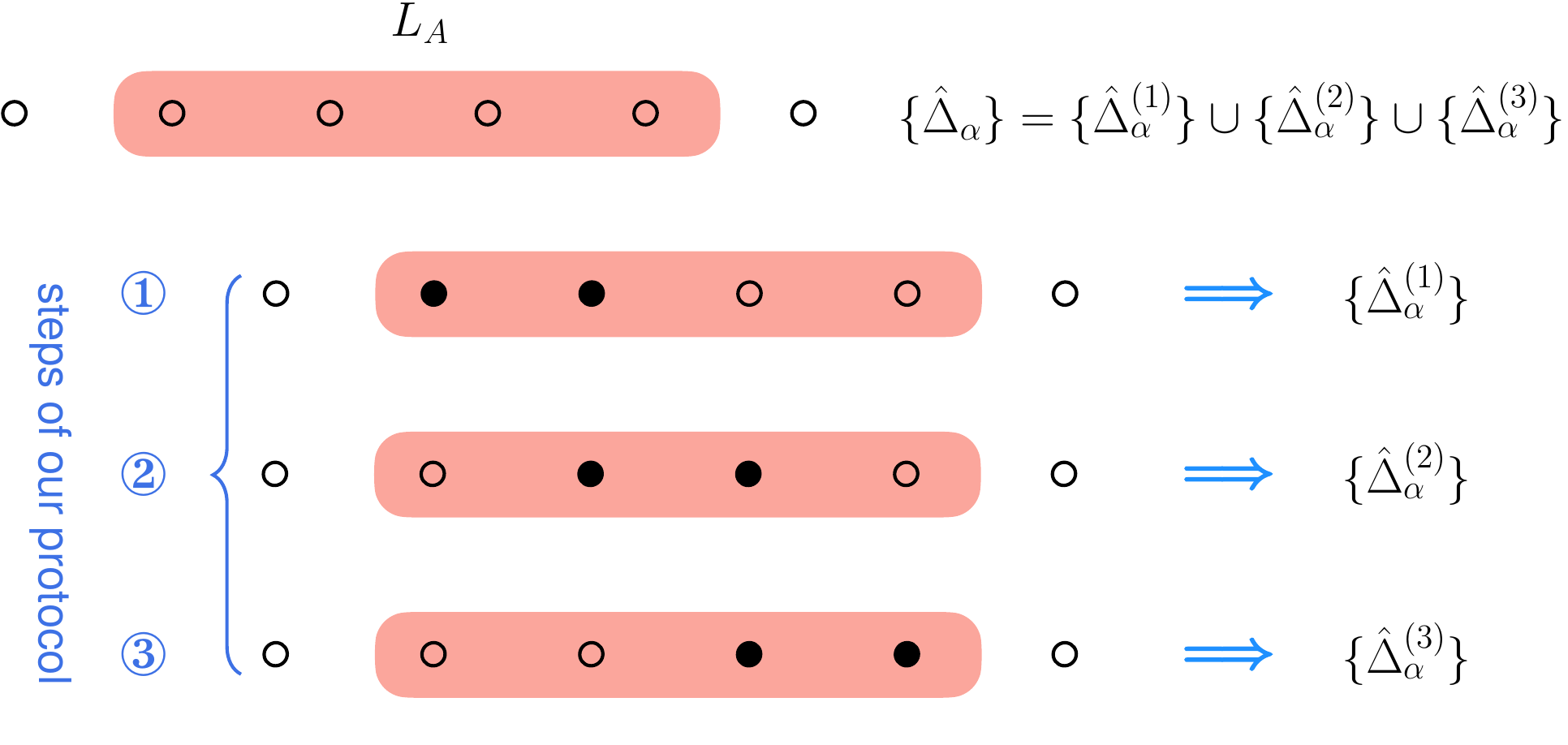}
	\caption{Illustration of steps of our protocol to bound entanglement entropy (EE) of subsystem $A$ (shaded red) using 2-local operator basis $\{\hat{L}_{i}\}$. For every consecutive two sites (dotted black) in $A$, we use a set of 2-local operator basis $\{\hat{L}_{i}\}$ acting on these two sites to construct the correlation matrix and determine the corresponding positive semidefinite operators $\{\hat{\Delta}_{\alpha}^{(n)} \}$ where $n$ labels the choices of these two sites. The $\hat{\Delta}_{\alpha}$ operators that enters the auxiliary Hamiltonian comes from
    the union of these $\{\hat{\Delta}_{\alpha}^{(n)}\}$.}
	\label{fig:protocol}
\end{figure}

\section{Equivalence between Different Choices of Eigenvectors $\{\mathbf{e}_{\alpha}\}$ with Degeneracy}

We shall show in the degenerate case where $\operatorname{dim}(\operatorname{Ker} (\mathcal{M})) = n > 1$ the final result of $D^{A}$ does not depend on the choice of eigenvectors of zero eigenvalues (our \textbf{Lemma 1} below). Let us denote the unitary matrix connecting two different sets of zero-eigenvalue basis $\{\mathbf{e}_{\alpha}\}$ and $\{\mathbf{\tilde{e}}_{\alpha}\}$ as $U$,
\begin{equation}
	\begin{pmatrix}
		\mathbf{\tilde{e}}_{1} \\
		\mathbf{\tilde{e}}_{2} \\
		\vdots \\
		\mathbf{\tilde{e}}_{n}
	\end{pmatrix}
	=
	\begin{pmatrix}
		U_{11} & U_{12} & \cdots & U_{1n} \\
		U_{21} & U_{22} & \cdots & U_{2n} \\
		\vdots \\
		U_{n1} & U_{n2} & \cdots & U_{nn}
	\end{pmatrix}
	\begin{pmatrix}
		\mathbf{e}_{1} \\
		\mathbf{e}_{2} \\
		\vdots \\
		\mathbf{e}_{n}
	\end{pmatrix} \, ,
\end{equation}
and the state from which the correlation matrix is constructed as $|\psi\rangle$. From
\begin{equation}
    \hat{O}_{\alpha} \ket{\psi} = \xi_{\alpha} \ket{\psi} \, ,
\end{equation}
we can easily write down the eigenoperator equation for $\hat{\widetilde{O}}_{\alpha}$,
\begin{equation*}
    \hat{\widetilde{O}}_{\alpha} \ket{\psi} = \sum_{\beta} U_{\alpha \beta} \hat{O}_{\beta} \ket{\psi} = \sum_{\beta} U_{\alpha \beta} \xi_{\beta} \ket{\psi} \equiv \widetilde{\xi}_{\alpha} \ket{\psi} \, ,
\end{equation*}
where we have identified $\widetilde{\xi}_{\alpha} = \sum_{\beta} U_{\alpha \beta} \xi_{\beta}$. The corresponding positive semidefinite operator $\hat{\widetilde{\Delta}}_{\alpha}$ for the tilde version then reads
\begin{eqnarray}
	\hat{\widetilde{\Delta}}_{\alpha} &=& (\hat{\widetilde{O}}_{\alpha}^{\dagger} - \tilde{\xi}^{*}_{\alpha} \hat{I}) ( \hat{\widetilde{O}}_{\alpha} - \tilde{\xi}_{\alpha} \hat{I}) \nonumber \\
     & = & \sum_{\beta, \gamma} U_{\alpha \beta} U_{\alpha \gamma}^{*} (\hat{O}_{\gamma}^{\dagger} - \xi^{*}_{\gamma} \hat{I}) ( \hat{O}_{\beta} - \xi_{\beta} \hat{I})  \label{eq:tilde_Delta} \, .
\end{eqnarray}

\begin{theorem}
	The value of ground-state degeneracy of the auxiliary Hamiltonian, $D^{A}$, is the same for different basis choices $\{\mathbf{e}_{\alpha}\}$ and $\{\tilde{\mathbf{e}}_{\alpha}\}$.
\end{theorem}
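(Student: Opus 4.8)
The plan is to characterize the zero-energy ground-state space of $\hat{H}^A_\text{aux}$ in a manifestly basis-independent way, and then read off that $D^A$ cannot depend on the choice $\{\mathbf{e}_\alpha\}$. First I would write $\hat{M}_\alpha \equiv \hat{O}_\alpha - \xi_\alpha\hat{I}$, so that $\hat{\Delta}_\alpha = \hat{M}_\alpha^\dagger \hat{M}_\alpha$ is manifestly positive semidefinite. Since every prefactor in Eq.~\eqref{eq:Haux} is strictly positive, a state $\ket{\phi}$ on $A$ is a zero-energy ground state of $\hat{H}^A_\text{aux}$ if and only if $\hat{\Delta}_\alpha\ket{\phi}=0$ for every $\alpha$. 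The positivity then gives $\hat{\Delta}_\alpha\ket{\phi}=0 \Leftrightarrow \bra{\phi}\hat{M}_\alpha^\dagger\hat{M}_\alpha\ket{\phi}=0 \Leftrightarrow \hat{M}_\alpha\ket{\phi}=0$, so the ground-state space is exactly the common kernel $\mathcal{G}=\bigcap_\alpha \ker\hat{M}_\alpha$, and $D^A=\dim\mathcal{G}$.

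The key algebraic input is the relation already recorded in Eq.~\eqref{eq:tilde_Delta}, namely $\hat{\widetilde{O}}_\alpha-\tilde\xi_\alpha\hat{I}=\sum_\beta U_{\alpha\beta}(\hat{O}_\beta-\xi_\beta\hat{I})$, i.e.\ $\tilde M_\alpha = \sum_\beta U_{\alpha\beta}\hat{M}_\beta$. Because $U$ is unitary and hence invertible, the families $\{\hat{M}_\beta\}$ and $\{\tilde M_\alpha\}$ span the \emph{same} linear subspace $\mathcal{W}$ of operators. The final step is the observation that the common kernel of a collection of operators depends only on the linear span of that collection: a vector is annihilated by every operator in $\mathcal{W}$ if and only if it is annihilated by any spanning set of $\mathcal{W}$. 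Hence $\bigcap_\alpha\ker\tilde M_\alpha=\bigcap_\beta\ker\hat{M}_\beta$, giving $\tilde{\mathcal{G}}=\mathcal{G}$ and therefore $\tilde D^A=D^A$.

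To lift this from a single correlation matrix to the full $\hat{H}^A_\text{aux}$ of Eq.~\eqref{eq:Haux}, I would note that the unitary $U$ in the statement mixes only the degenerate zero eigenvectors of one $k$-site region. Applying the above argument region by region, each region contributes the same operator span $\mathcal{W}$ to $\hat{H}^A_\text{aux}$ regardless of the local basis choice, so the total ground-state space, the intersection of the per-region kernels, is unchanged; the degeneracy $D^A$ is therefore invariant.

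I expect the only genuinely delicate point to be the ``span determines common kernel'' step: one must argue with the entire linear subspace $\mathcal{W}$ as the invariant object rather than with the labelled list of operators, and it is precisely the invertibility of $U$ that forces the two spans to coincide rather than merely one to contain the other. The remaining ingredients — identifying $\ker\hat{\Delta}_\alpha$ with $\ker\hat{M}_\alpha$ via positivity, and passing to the union over regions — are routine.
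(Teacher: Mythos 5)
Your proof is correct, and it reaches the conclusion by a route that differs in a meaningful way from the paper's. The paper argues at the level of the quadratic operators: it sums the condition $\bra{G}\hat{\widetilde{\Delta}}_{\alpha}\ket{G}=0$ over $\alpha$, uses the unitarity relation $\sum_{\alpha}U_{\alpha\beta}U_{\alpha\gamma}^{*}=\delta_{\beta\gamma}$ to show that $\sum_{\alpha}\hat{\widetilde{\Delta}}_{\alpha}=\sum_{\alpha}\hat{\Delta}_{\alpha}$ as operators, and then invokes positive semidefiniteness to conclude that each $\hat{\Delta}_{\alpha}\ket{G}$ vanishes separately, finishing by symmetry between the two bases. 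You instead linearize the problem: you identify $\ker\hat{\Delta}_{\alpha}=\ker\hat{M}_{\alpha}$ with $\hat{M}_{\alpha}=\hat{O}_{\alpha}-\xi_{\alpha}\hat{I}$, observe that $\tilde{M}_{\alpha}=\sum_{\beta}U_{\alpha\beta}\hat{M}_{\beta}$ with $U$ invertible, and use the fact that a common kernel depends only on the linear span of the operator family. Your version is slightly more general and arguably cleaner, since it needs only invertibility of $U$ rather than unitarity, so it covers arbitrary (not necessarily orthonormal) re-parametrizations of the degenerate zero-eigenvalue subspace of $\mathcal{M}$; the paper's version buys a one-line operator identity for the summed auxiliary Hamiltonian at the cost of using the full unitarity. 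One shared implicit assumption worth keeping in mind in both arguments: identifying the ground-state space with the common kernel presupposes that the ground-state energy of $\hat{H}^{A}_{\text{aux}}$ is zero, which is guaranteed in context because the eigenvectors of $\hat{\rho}_{A}$ are annihilated by every projector. Your region-by-region reduction at the end is also consistent with how the paper assembles $\hat{H}^{A}_{\text{aux}}$.
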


\begin{proof}
	It suffices to show if a ground state $\ket{G}$ of the auxiliary Hamiltonian $\hat{H}^{A}_{\text{aux}}$ satisfy $\hat{\widetilde{\Delta}}_{\alpha} \ket{G} = 0, \; \forall \alpha$, then $\hat{\Delta}_{\alpha} \ket{G} = 0, \; \forall \alpha$ since we can reverse the role $\{\mathbf{e}_{\alpha}\}$ and $\{\tilde{\mathbf{e}}_{\alpha}\}$ and repeat the the whole process. From our condition $\hat{\widetilde{\Delta}}_{\alpha} \ket{G} = 0, \; \forall \alpha$ and Eq.~\eqref{eq:tilde_Delta}, we have
	\begin{equation}
		\bra{G} \sum_{\beta, \gamma} U_{\alpha \beta} U_{\alpha \gamma}^{*} (\hat{O}_{\gamma}^{\dagger} - \xi^{*}_{\gamma} \hat{I}) ( \hat{O}_{\beta} - \xi_{\beta} \hat{I}) \ket{G} = 0 \, .
	\end{equation}
	Summing the $\alpha$ index and noting $\sum_{\alpha} U_{\alpha \beta} U_{\alpha \gamma}^{*} = \delta_{\beta \gamma}$ (delta function), we have
	\begin{equation*}
		\bra{G} \sum_{\beta}  (\hat{O}_{\gamma}^{\dagger} - \xi^{*}_{\beta} \hat{I}) ( \hat{O}_{\beta} - \xi_{\beta} \hat{I}) \ket{G} = \bra{G} \sum_{\alpha} \hat{\Delta}_{\alpha} \ket{G} = 0  \, ,
	\end{equation*}
	where we have changed the dummy summation index from $\beta$ to $\alpha$. Since $\hat{\Delta}_{\alpha}$ is positive semidefinite, the above equation implies
	\begin{equation}
		\hat{\Delta}_{\alpha} \ket{G} = 0  \,, \quad \forall \alpha \, .
	\end{equation}
	Therefore, $D^{A}$ does not depend on the choice of eigenvector basis in the degenerate space of zero eigenvalues.
\end{proof}

\section{EE Bounds of Product States }
In this section, we take product states as an example for demonstrating the effectiveness of our GSD-EE theorem. We shall prove the local correlation matrix of 1-local operator basis $\{\hat{L}_{i}\}$ is enough to give an EE upper bound of zero for produce state.

For definiteness, consider a spin-1/2 system and denote the local bases on a site as $\ket{0}$ and $\ket{1}$. Without loss of generality, we can write the product state as
\begin{equation}
    \ket{\psi} = \prod_{i} \otimes \ket{0}_{i} = \ket{0 0 \cdots 0}  \, ,
\end{equation}
since one can always rotate local basis on every site freely. Now, choose the nontrivial 1-local operator basis as the standard Pauli operators $\hat{X}, \hat{Y}$, and $\hat{Z}$.
The correlation matrix, defined in Eq.~(1) in our manuscript, is thus
\begin{equation}
    \mathcal{M}_{ij} = \bra{\psi} \hat{\sigma}_{i}  \hat{\sigma}_{j} \ket{\psi} - \bra{\psi} \hat{\sigma}_{i}^{\dagger} \ket{\psi} \bra{\psi} \hat{\sigma}_{j} \ket{\psi}
\end{equation}
where we have used the shorthand notation $\hat{\bm{\sigma}} \equiv (\hat{X}, \hat{Y}, \hat{Z})$. The explicit form of $\mathcal{M}$ can be easily written down
\begin{equation}
\mathcal{M} =
    \begin{pmatrix}
    1 & -i & 0 \\
    i &  1 & 0 \\
    0 &  0 & 0
    \end{pmatrix} \, .
\end{equation}
The eigenvalues of $\mathcal{M}$ are $0$, $0$, and $2$, and two linear independent eigenvectors of $\mathcal{M}$ with zero eigenvalues can be taken as $\mathbf{e}_{1} = (1/\sqrt{2}, -i/\sqrt{2}, 0)^{T}$ and $\mathbf{e}_{2} = (0, 0, 1)^{T}$. The corresponding two eigenoperators are $\hat{O}_{1} = (\hat{X} - i \hat{Y})/\sqrt{2}$ with $\xi_{1} = 0$ and $\hat{O}_{2} = \hat{Z}$ with $\xi_{2} = -1$.
The associated positive semidefinite operators, defined in Eq.~(4) in the main text, $\hat{\Delta}_{1}$ and $\hat{\Delta}_{2}$ are
\begin{equation}
\begin{aligned}
    \hat{\Delta}_{1} &= (\hat{X} + i\hat{Y})(\hat{X} - i\hat{Y})/2 = \hat{I} + \hat{Z} = 2 \ket{1} \bra{1} \, , \\
    \hat{\Delta}_{2} &= (\hat{Z} + \hat{I}) (\hat{Z} + \hat{I}) = 2(\hat{Z} + \hat{I}) = 4 \proj{1} \, .
\end{aligned}
\end{equation}
This means the auxiliary Hamiltonian $\hat{H}^{A}_{\text{aux}}$, defined in Eq.~(6) in the main text, takes the form
\begin{equation}
    \hat{H}^{A}_{\text{aux}} = \sum_{i \in A} c_{i} \ket{1}_{i} \bra{1}_{i} \, , \quad c_{i} > 0 \, ,
\end{equation}
where $i$ labels the sites in subsystem A. Clearly, \emph{regardless of the value of $c_{i}$ as long as $c_{i} > 0$}, the ground state of $\hat{H}^{A}_{\text{aux}}$ under the open boundary condition is unique: $\ket{G} = \prod_{i \in A} \otimes \ket{0}_{i}$. From our GSD-EE theorem, the von Neumann entropy $S_{v}$ is upper bounded by $\log 1 = 0$. On the other hand, $S_{v} \geq 0 $, and, therefore, our GSD-EE theorem yields the exact result $S_{v} = 0$ for product states.\\

A remark is in order here. In practice we do not know the rotated basis, $\ket{0}$ and $\ket{1}$, and hence do not know $\hat{X}, \hat{Y}$, and $\hat{Z}$ defined above. However, the choice of bases $\{ \hat{L}_{i} \}$ is immaterial. Indeed, if one repeats the above process for a general product state,
\begin{equation}
    \ket{\psi} = \prod_{i} \otimes \ket{\phi_{i}} = \prod_{i} \otimes (\alpha_{i} \ket{0} + \beta_{i} \ket{1}) \, ,
\end{equation}
one will again find two zero eigenvalues of the correlation matrix. The two eigenvectors, up to some normalization constants, can be taken as
\begin{equation}
\begin{aligned}
    \mathbf{e}_{1} & = \big( \dfrac{\alpha \beta^{*} + \alpha^{*} \beta}{|\beta|^{2} - |\alpha|^{2}}, \dfrac{i\alpha^{*} \beta - i\alpha \beta^{*}}{|\beta|^{2} - |\alpha|^{2}}, 1   \big)^{T} \, , \\
    \mathbf{e}_{2} & = \big( \dfrac{\alpha^{2} - \beta^{2}}{2\alpha \beta}, -i \dfrac{\alpha^{2} - \beta^{2}}{2\alpha \beta}, 1 \big)^{T} \, .
\end{aligned}
\end{equation}
The eigenvalues $\xi_{1}$ and $\xi_{2}$ for $\hat{O}_{1}$ and $\hat{O}_{2}$ are $\xi_{1} = 1/(|\beta|^{2} - |\alpha|^{2})$ and $\xi_{2} = 0$, respectively. The two $\hat{\Delta}$ operators, up to some constant normalization factors, constructed for site $i$ are
\begin{equation}
\hat{\Delta}_{1} =
\hat{\Delta}_{2} =
\begin{pmatrix}
|\alpha_{i}|^{2}  & -\alpha_{i}^{*} \beta_{i} \\
-\alpha_{i} \beta_{i}^{*} & |\beta_{i}|^{2}
\end{pmatrix}
= \proj{\phi_{i}^{\perp}} \, ,
\end{equation}
where $\ket{\phi_{i}^{\perp}} = \beta_{i}^{*} \ket{0} - \alpha_{i}^{*} \ket{1}$ is orthogonal (zero inner product) to $\ket{\phi_{i}}$. The resulting auxiliary Hamiltonian then reads (up to some normalization constants)
\begin{equation}
    \hat{H}^{A}_{\text{aux}} = \sum_{i} c_{i} \proj{\phi_{i}^{\perp}} \, , \quad c_{i} > 0 \, ,
\end{equation}
and its ground state is unique with $D^{A} =1$.\\

Having showed the special case of spin-1/2 systems where the local Hilbert space dimension $q$ equals two, we shall continue to give a formal constructive proof for the case of general $q$. Denote the local Hilbert basis as $\ket{i}, i=0, 1, \cdots, q-1$ and still, without loss of generality, the general form a product state $\ket{\psi}$ is taken to be
\begin{equation} \label{eq:prod_0}
	\ket{\psi} = \prod_{i} \otimes \ket{0}_{i} = \ket{0 0 \cdots 0}  \, ,
\end{equation}
as before. And we shall take the operator basis $\{\hat{L}_{i}\}$ to be the set of complete 1-local operators (still the trivial identity operator can be neglected as before). Then, a general 1-body eigenoperator $\hat{O}$ where $\hat{O} \ket{\psi} = \xi \ket{\psi}$ (they exist, for example we can construct $\ket{0}\bra{1}$ as one such operator) can be formally written as
\begin{equation}
	\hat{O} = \xi \proj{0} + \sum_{i=1}^{q-1} \ketbra{\phi_{i}}{i} \, ,
\end{equation}
where $\ket{\phi_{i}} = \hat{O} \ket{i}$. The positive semidefinite operator $\hat{\Delta}$ then reads
\begin{equation}
	\hat{\Delta} = (\hat{O}^{\dagger} - \xi^{*} \hat{I}) ( \hat{O} - \xi \hat{I}) = \sum_{i, j=1}^{q-1} C_{ji} \ketbra{j}{i}
\end{equation}
where
\begin{equation}
	C_{ji} = \left( \bra{\phi_{j}} - \xi^{*} \bra{j} \right) \left( \ket{\phi_{i}} - \xi \ket{i} \right)
\end{equation}
is positive semidefinite (can be proven by showing $\sum_{i,j} x_{j}^{*} C_{ji} x_{i} \geq 0$ since it is the squared norm of $x_{i}(\ket{\phi_{i}} - \xi \ket{i})$) as expected. Therefore, we can diagonalize $C_{ij}$ and bring $\hat{\Delta}$ into diagonal form
\begin{equation}
    \hat{\Delta} = \sum_{i, j=1}^{q-1} C_{ji} \ketbra{j}{i} = \sum_{i} a_{i} \proj{i} \equiv \sum_{i} a_{i} \hat{P}_{i} \, ,
\end{equation}
where $a_{i} \geq 0$ are the eigenvalues of matrix $C_{ij}$. The important thing to note is that $\hat{P}_{i} \ket{0} = \ket{\tilde{i}} \braket{\tilde{i} | 0} = 0$ since $\ket{\tilde{i}}$ is a linear combination of $\ket{i}$ with $i=1, 2, \cdots, q-1$. Moreover, by taking a set of states $|\phi_j \rangle$ where the matrix $C_{ij}$ has rank $q-1$ (for example, setting $\ket{\phi_i} = (\xi + 1) \ket{i}$ makes $C$ an identity matrix), $|0\rangle$ becomes the only state in the kernel of $\hat{\Delta}$. As a result, the ground state of $\hat{H}^{A}_{\text{aux}}$ under the open boundary condition is unique as in the spin-1/2 case: $\ket{G} = \prod_{i \in A} \otimes \ket{0}_{i}$, and our upper bound of the von Neumann entropy is exact, $S_{v} = \log 1 = 0$.

\section{Special Algorithm for the Ground-State Degeneracy of the Auxiliary Hamiltonian} \label{sup:Algorithm}
Our GSD-EE theorem states that the EE of a state $\ket{\psi}$ on a subsystem of size $L_{A}$ is upper bounded by the ground-state degeneracy $D^{A}$ of the auxiliary Hamiltonian $\hat{H}^{A}_{\text{aux}}$ under the open boundary condition. Because of the exponential increase of the Hilbert space dimension, the maximum size $L_{A}$ can be reached by directly attacking this problem using the exact diagonalization method is quite limited. We now explain a special algorithm that overcomes the difficulty. For the simplicity of notations, consider the concrete example of AKLT scar states mentioned in the main text. Here, due to the translation symmetry, we obtain the same set of projectors from the local correlation matrix of any three consecutive sites.
Let $\ket{G_{m}^{(L_{A})}}$ denote the orthonormal ground states of the auxiliary Hamiltonian of system size $L_{A}$ with $m=1, 2, \cdots, D^{A}$ as the index, and let $\hat{P}^{(L)}_{i}$ denote the $i$-th projector acting on the sites $L-2, L-1$ and $L$, so that
\begin{equation}
	\hat{H}^{A}_{\text{aux}} = \hat{H}^{A'}_{\text{aux}} + \sum_{i} c_{i} \hat{P}^{(L_{A})}_{i} \equiv \hat{H}^{A'}_{\text{aux}} + \hat{H}^{(L_{A})}
\end{equation}
where $A'$ is the subsystem of sites the $1, 2, \cdots, L_{A}-1$, and $\hat{H}^{(L_{A})}$ is the part that acts only on the three sites $L_{A}-2, L_{A}-1$ and $L_{A}$. Since $\ket{G^{(L_{A})}_{m}}$ is annihilated by every projector acting on the sites $1, 2, \cdots, L_{A} -1$, we can expand it using the ground states $\ket{G^{(L_{A}-1)}_{k}}$  of the auxiliary Hamiltonian with system size $L_{A} - 1$ and the basis $\ket{e_{n}}$ on the $L_{A}$-th site,
\begin{equation} \label{eq:recursive}
	\ket{G_{m}^{(L_{A})}} = \sum_{n, k} C_{kn, m}^{(L_{A})} \ket{G^{(L_{A}-1)}_{k}} \otimes  \ket{e_{n}}
\end{equation}
where $n$ and $k$ are the indices of the corresponding orthonormal states and $C_{kn, m}^{(L_{A})}$ is the unknown coefficient. These coefficients and the degeneracy $D^{A}$ can be obtained by diagonalizing $\hat{H}^{A}_{\text{aux}}$ or equivalently $\hat{H}^{(L_{A})}$ under the basis $\ket{G^{(L_{A}-1)}_{k}} \otimes  \ket{e_{n}}$. Moreover, we do not need to store the memory-consuming part $\ket{G^{(L_{A}-1)}_{k}}$ in this process. Recursively using \eqref{eq:recursive}, we can expand $\ket{G^{(L_{A}-1)}_{k}}$ as
\begin{equation*}
    \sum_{l, n} C_{ln, k}^{(L_{A}-1)} \left( \sum_{i,j} C_{ij, l}^{(L_{A}-2)}  \ket{G_{i}^{(L_A-3)}} \otimes \ket{e_{j}} \right) \otimes \ket{e_{n}} \, .
\end{equation*}
Since $\hat{H}^{(L_{A})}$ acts trivially on $\ket{G_{i}^{(L_A-3)}}$, we just need to store the coefficients $C_{ij, l}^{(L_{A}-2)}$ and $C_{ln, m}^{(L_{A}-1)}$ to calculate the matrix elements of $\hat{H}^{(L_{A})}$ in the basis $\ket{G^{(L_{A}-1)}_{k}} \otimes  \ket{e_{n}}$, diagonalize the resulting matrix, and obtain the ground state degeneracy $D^{A}$ as well as $C_{kn, m}^{(L_{A})}$. Since this recursive method is memory-cheap and we just need to diagonalize a matrix of dimension $D^{A'} \times q$ (where $q$ is the Hilbert space dimension of a single site) to get $D^{A}$, calculating the ground state degeneracy of system size $L_{A}$ of a few hundreds is very easy.

\section{Projector Space of Scar States in the AKLT Model} \label{sup:AKLT}
To employ our GSD-EE theorem to upper bound the bipartite entanglement entropies of all the scar states,
\begin{equation} \label{eq:AKLT_scar}
	\ket{\psi_{n}} = (\hat{Q}^{\dagger})^{n} \ket{G},
\end{equation}
we need to find a set of projection operators $\{\hat{P}_{i}\}$ such that each of them annihilates every $\ket{\psi_{n}}$,
\begin{equation}
	\hat{P}_{i} \ket{\psi_{n}} = 0 \, , \quad \forall \, i, n \, .
\end{equation}
Let $\hat{H}_{\text{aux}, n}^{A}$ denotes an auxiliary Hamiltonian of $\ket{\psi_{n}}$, let  $\hat{P}_{i}^{(n)} = \ket{i^{(n)}} \bra{i^{(n)}}$ denotes the projectors entering $\hat{H}_{\text{aux}, n}^{A}$. Since, as implied by \textbf{Lemma 1}, the kernel (ground-state manifold) of the auxiliary Hamiltonian is \textit{invariant} for fixed operator space $\mathcal{V} = \operatorname{Span} \{ \hat{L}_{i}\}$, the complement of the kernel, the state space $V^{(n)} \equiv \operatorname{Span} \{  \ket{i^{(n)}}\}$, is also invariant. Therefore, the direct way to identify $\{ \hat{P}_{i} \}$ annihilating all the scar states is to first calculate the common intersection $V_{\cap} \equiv \bigcap\limits_{n} \, V^{(n)} $ and then form $\{ \proj{i} \}$ using a set of bases $\{\ket{i}\}$ in $V_{\cap}$. This is, however, computationally too expensive, for it requires calculating $V^{(n)}$ for every $\ket{\psi_{n}}$.  Fortunately, making use of the special structure of the AKLT Hamiltonian and the ground state $\ket{G}$, we typically only need to deal with one $\ket{\psi_{n}}$, thus significantly reducing the computational cost. First note that the AKLT Hamiltonian under the periodic boundary condition (PBC) is a sum of projectors,
\begin{equation}
	\hat{H}_{\text{AKLT}} = \sum_{i=1}^{L} \hat{P}_{i,i+1}^{(2)} \, ,
\end{equation}
where $\hat{P}_{i, i+1}^{(2)}$ is the projection operator of two spin-1's at site $i$ and $i+1$ onto total spin-2, and that the ground state $\ket{G}$ is annihilated by every  $\hat{P}_{i, i+1}^{(2)}$,
\begin{equation}
	\hat{P}^{(2)}_{i,i+1} \ket{G} = 0 \, , \quad \forall \; i \, .
\end{equation}
If we partite the system into two parts, one consisting of $k$ consectutive sites and the other part containing the rest $L-k$ sites as shown in Fig.~[\ref{fig:decomposition}], and denote the part of the Hamiltonian acting only in the $k$-site region as $\hat{H}_{k}$, i.e.
\begin{equation}
	\hat{H}_{k} = \sum_{i=1}^{k-1} \hat{P}_{i,i+1}^{(2)} \, ,
\end{equation}
it follows that the ground state is annihilated by $\hat{H}_{k}$,
\begin{equation} \label{eq:AKLT_k}
	\hat{H}_{k} \ket{G} = 0 \, .
\end{equation}
\begin{figure}[htbp]
	\centering
	\includegraphics[width=0.8\linewidth]{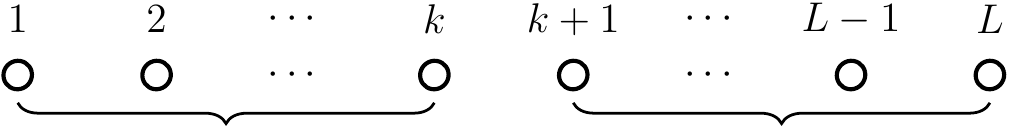}
	\caption{Partition of the system into $k$-site region and $(L-k)$-site region.}
	\label{fig:decomposition}
\end{figure}
Since $\hat{H}_{k}$ is nothing but the AKLT Hamiltonian of the $k$-site region under the open boundary condition (OBC), Eq.~\eqref{eq:AKLT_k} implies $\ket{G}$ can be decomposed in the following way,
\begin{equation} \label{eq:G_decomp}
	\ket{G} = \sum_{\eta=1}^{4} \ket{G_{k, \eta}} \otimes \ket{\Psi_{\eta}}
\end{equation}
where $\ket{{G}_{k,\eta}}$ is the ground state of the $k$-site AKLT Hamiltonian under the OBC with $\eta=1,2,3,4$ as labels of the four-fold degeneracy, and $\ket{\Psi_{\eta}}$ is a state vector in the Hilbert space of the rest $L-k$ sites. Similarly, we can
write $\hat{Q}^{\dagger}$ as a sum of two terms, denoted by $\hat{Q}_{k}$ and $\hat{Q}_{L-k}^{\dagger}$, that act only in the $k$- and $(L-k)$-site regions respectively,
\begin{equation*} \label{eq:Q_decomp}
    \hat{Q}^{\dagger} = \sum_{i=1}^{k} (-1)^{i} \left( \hat{S}_{i}^{+} \right)^{2} + \sum_{i=k+1}^{L} (-1)^{i} \left( \hat{S}_{i}^{+} \right)^{2}  \equiv \hat{Q}_{k}^{\dagger} + \hat{Q}_{L-k}^{\dagger} \, .
\end{equation*}
The key observation is that, using the decomposition of $\ket{G}$ and $\hat{Q}^{\dagger}$, we can decompose a general scar state, Eq.~\eqref{eq:AKLT_scar}, as
\begin{eqnarray} \label{eq:scar_decomp}
\ket{\psi_{n}} &= & \left( \hat{Q}_{k}^{\dagger} + \hat{Q}_{L-k}^{\dagger} \right)^{n} \ket{G} \nonumber  \\
      & = & \sum_{m=0}^{n} \sum_{\eta=1}^{4} (\hat{Q}_{k}^{\dagger})^{m} \ket{G_{k, \eta}} \otimes \ket{\tilde{\Psi}_{m, \eta}} \, ,
\end{eqnarray}
where $\ket{\tilde{\Psi}_{m, \eta}}$ is a state vector in the Hilbert space the $L-k$ sites. Since $\ket{G_{k, \eta}}$ only contain $z$-component of the total spin, $S_{z}$, greater or equal than $-1$ and each $\hat{Q}_{k}^{\dagger}$ raises $S_{z}$ by two \cite{Auerbach:Book}, the maximum value of $m$ we need to address is $\left\lfloor\frac{k+1}{2} \right\rfloor$ where $\left\lfloor x \right\rfloor$ denotes the maximum integer no bigger than $x$. Let $\{ \ket{\Phi_{i}} \}$ denotes a set of orthonormal bases where each basis has zero inner product with $(\hat{Q}_{k}^{\dagger})^{m} \ket{G_{k, \eta}}$ for every possible values of $m$ and $\eta$,
\begin{equation}
    \langle \Phi_{i} \vert (\hat{Q}_{k}^{\dagger})^{m} \ket{G_{k, \eta}}  = 0 \, , \quad \forall \; i, m, \eta \, .
\end{equation}
We can then easily construct a set of $k$-local projection operators $\{\proj{\Phi_{i}}\}$
and each $\proj{\Phi_{i}}$ annihilates all the scar states, Eq.~\eqref{eq:AKLT_scar}, for system size $L>k$. From construction, it is clear that the state space spanned by
$\{ \ket{\Phi_{i}} \}$, denoted as $V_{\Phi}$, is a subset of the common intersection $V_{\cap}$ we are looking for,
\begin{equation} \label{eq:VinV1}
	V_{\Phi} \equiv \operatorname{Span}\{\, \ket{\Phi_{1}}, \ket{\Phi_{2}}, \cdots \,\} \subseteq V_{\cap} \, .
\end{equation}
On the other hand, for a scar state $\ket{\psi_{n}}$ with fixed $ n \geq \lfloor \frac{k+1}{2} \rfloor$ of a \textit{large enough} system size $L$, the states $\ket{\tilde{\Psi}_{m,\eta}}$ in the decomposition of $\ket{\psi_{n}}$, Eq.~\eqref{eq:scar_decomp}, are in general linearly independent. This implies that for such a scar state $\ket{\psi_{n}}$,
\begin{equation*}
	\hat{P}_{i}^{(n)} \ket{\psi_{n}} = 0 \, , \; \Leftrightarrow \; \hat{P}_{i}^{(n)} (\hat{Q}_{k}^{\dagger})^{m} \ket{G_{k, \eta}}  = 0 \, , \; \forall \; m, \eta \, .
\end{equation*}
In other words, $V^{(n)} = V_{\Phi}$ for such $\ket{\psi_{n}}$, which further implies $V_{\cap} \subseteq V_{\Phi}$. Together with Eq.~\eqref{eq:VinV1}, we conclude
\begin{equation*}
    V_{\Phi} = V_{\cap} \, ,
\end{equation*}
and $V_{\cap}$ can be determined by such a scar state $\ket{\psi_{n}}$. \\

The above consideration holds generally for $k \geq 2$, and further simplification can happen for a given $k$. For example, for $k=3$, a simple calculation shows that
\begin{equation*}
	\operatorname{Span} \{ \, (\hat{Q}_{3}^{\dagger})^{2} \ket{G_{3, \eta}} \, \} \subset \operatorname{Span} \{ \, \ket{G_{3, \eta}}, \hat{Q}_{3}^{\dagger} \ket{G_{3, \eta}} \, \} \, .
\end{equation*}
This means that a $3$-local projector that annihilates $\ket{\psi_{1}} = \hat{Q}^{\dagger} \ket{G}$ will also annihilates all the other scar states $\ket{\psi_{n}}$ for all system sizes $L>3$. Moreover, our $3$-local operator basis
\begin{equation}
    \hat{L}_{i} = \lambda_{a}^{(1)} \otimes \lambda_{b}^{(2)} \otimes \lambda_{c}^{(3)} \, ,  \quad a, b, c =1, 2, \cdots, 8
\end{equation}
is general enough for the constructed state space $V = \operatorname{Span} \{ \ket{i}\}$ from $\{ \hat{P}_{i}\}$ to contain $V_{\Phi}$. Similar result holds both for 4-local operator basis
\begin{equation}
	\hat{L}_{i} = \lambda_{a}^{(1)} \otimes \lambda_{b}^{(2)} \otimes \lambda_{c}^{(3)} \otimes \lambda_{d}^{(4)} \, ,
\end{equation}
and 5-local operator basis
\begin{equation}
	\hat{L}_{i} = \lambda_{a}^{(1)} \otimes \lambda_{b}^{(2)} \otimes \lambda_{c}^{(3)} \otimes  \lambda_{d}^{(4)} \otimes \lambda_{e}^{(5)} \, ,
\end{equation}
where $a, b, c, d, e =1, 2, \cdots, 8$.

\section{EE Bounds of Scar States in the Fermi--Hubbard Model} \label{FermiHubbard}
The scar states in the 1D extended Fermi--Hubbard model for spin-$1/2$ fermions take a similar form to those of the AKLT model \cite{Mark:2020ep,Moudgalya:2020Hubbard}. They are known as the $\eta$-pairing states \cite{Yang:1989PRL_Hubbard} and take the form of
\begin{equation} \label{eq:Hubbard_scar}
    \ket{\psi_{n}} = (\hat{\eta}^{\dagger})^{n} \ket{0} \, ,
\end{equation}
where $\ket{0}$ is the vacuum state with no fermions and $\hat{\eta}^\dagger = \sum_{i} (-1)^{i} \hat{c}_{i\uparrow}^{\dagger} \hat{c}_{i\downarrow}^{\dagger}$ is the so-called $\eta$-pairing operator with $\hat{c}_{i\uparrow}^{\dagger}$ ($\hat{c}_{i\downarrow}^{\dagger}$) being the fermionic creation operator for spin up (down) at site $i$.
The procedure to construct the local correlation function parallels that of the AKLT model.
We choose our target scar state as the half-filled state $\ket{\psi}_{L/2}$ ($L$ is assumed to be even), and operators $\hat{L}_i$ as range-$3$ operators with a similar tensor product form, $\hat{L}_{i} = \lambda_{a}^{(1)} \otimes \lambda_{b}^{(2)} \otimes \lambda_{c}^{(3)}$ on three neighboring sites $1$, $2$ and $3$.
The Hilbert space is four-dimensional for each site, spanned by an unoccupied state, two singly occupied states with different spins, and one doubly occupied state.
Thus, here we take $\lambda$ as the fifteen generators of SU(4) group and the identity ($a, b, c = 1,\cdots,16$). The local correlation matrix is then calculated and the auxiliary Hamiltonian is constructed in a similar fashion.
We find ground state degeneracy $D^A$ scales with the subsystem size $L_\text{A}$ as $D^A=L_{A}+1$, meaning the EE of this state is bounded by the logarithm of the subsystem size.
For reasons identical to the AKLT case, this bound is also valid for all other scar states of $n \neq L/2$. Finally, we compare the exact value of EE with our bound of the scar state $\ket{\psi_{L/2}}$. The exact value of EE is analytically known \cite{Vafek:2017bv} to be asymptotically $S_{v} \propto (\log L_{A})/2$. Therefore, apart from a ratio of two, our bound gives the correct log-volume scaling of the EE.

\begin{figure}[htbp]
\centering
\includegraphics[width=0.8\linewidth]{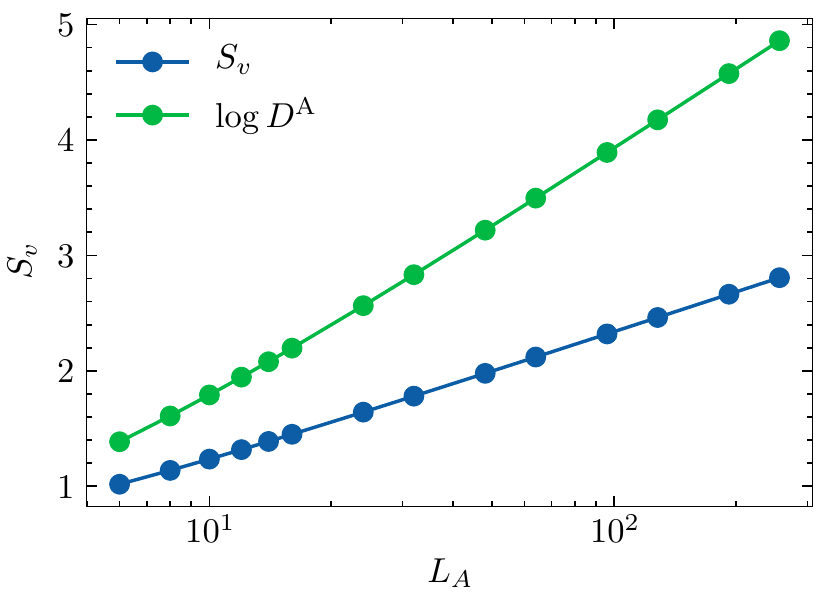}
\caption{Comparison between the EE bounds (green dots) from the ground state degeneracies of the auxiliary Hamiltonian $\hat{H}^A_{\text{aux}}$ and the exact values (blue dots) for $|\psi_{L/2} \rangle$ for various subsystem size $L_\text{A}$. }
\label{fig:Hubbard}
\end{figure}

\section{Numerical Details for Bounding Entanglement Entropies in PXP Model}
In this section, we shall elaborate on the numerical procedure we use to bound the EEs of the scar states in the PXP model.  We start to construct the correlation matrix for given range-$k$ operator basis of the form (still the identity operator is excluded)
\begin{equation}
    \hat{L}_{i} = \lambda_{a_1}^{(1)} \otimes \lambda_{a_2}^{(2)} \otimes...\otimes \lambda_{a_k}^{(k)} \, ,
\end{equation}
where $\{a_{j}\} = \{ 0, 1, 2, 3 \}$ for $j=1, 2, \cdots, k$ are the indices of the $2 \times 2$ identity matrix $\mathbb{I}$ (where $a_{j}=0$) and the three Pauli matrices $\{\sigma_{x}, \sigma_{y}, \sigma_{z}\}$. We calculate the $(4^k - 1)$-dimensional local correlation matrix for every eigenstate and compare the correlation spectrum of a scar state with surrounding thermal states. As shown in Fig.~\ref{fig:pxpl20k5}, the correlation spectra of scar states are markedly different with more smaller eigenvalues. From such plot, we can determine a position $\alpha=N_{1}$ (labeled by pink arrow in Fig.~\ref{fig:pxpl20k5}) where the eigenvalue plot of the scar state is about to cross those of the surrounding thermal states.
\begin{figure}[htbp]
\centering
\includegraphics[width=0.9\linewidth]{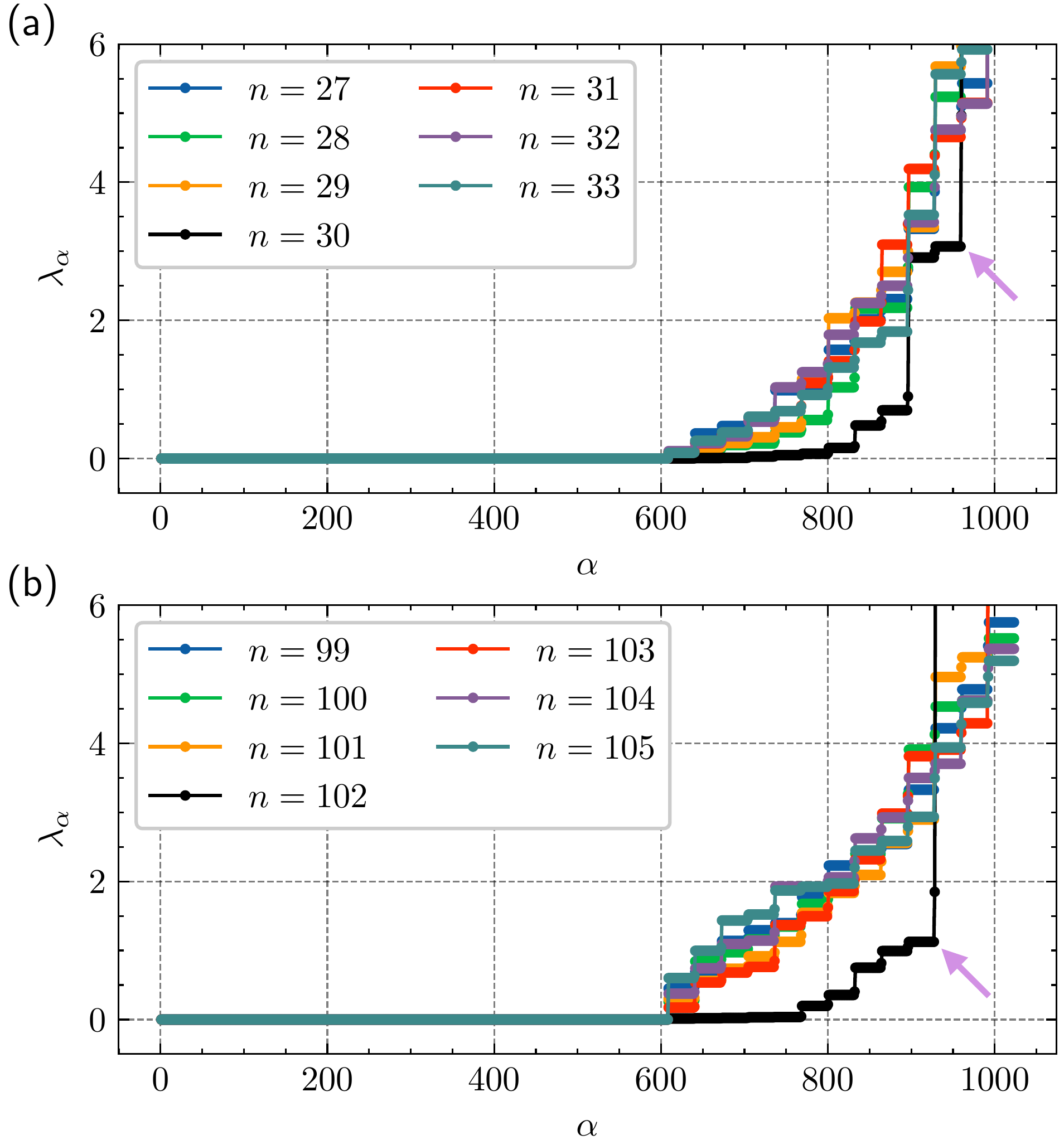}
\caption{Plots of correlation spectra for a group of eigenstates (eigenstate indices $n$ shown in the legends) around two quantum many-body scar states (marked as black) for the PXP model in the zero momentum and inversion even sector. The system size is $L=20$ and the range of the operator basis $\{\hat{L}_{i}\}$ is $k=5$. In both groups, the eigenvalues of the correlation
matrix of the scar state noticeably fall below (excluding the first $N_{0}=608$ trivial zeros) those of the surrounding thermal eigenstates when $\alpha \le N_{1}$ (corresponding points marked by pink arrows). The $N_{1} - N_{0}$ positive semidefinite $\hat{\Delta}_{\alpha}$ operators associated with these eigenvalues are then taken as the starting point to construct $\hat{H}^{A}_{\text{aux}}$ to bound EE.}
\label{fig:pxpl20k5}
\end{figure}
We shall use the $N_{1} - N_{0}$ (we subtract $N_{0}$ to exclude the trivial zeros) eigenvalues to construct an auxiliary Hamiltonian $\hat{H}_{\text{aux}}^{A}$ to bound the EE. For such nonzero eigenvalue $\lambda_{\alpha} \neq 0$, we still proceed to define an operator $\hat{O}$ using the corresponding eigenvector $\mathbf{e}_{\alpha} = (w_{1}, w_{2}, \cdots, w_{n})$ in a similar fashion, $\hat{O} = \sum_{i} w_{i} \hat{L}_{i}$. In this case, $\hat{O}$ is not an eigenoperator of $\ket{\psi}$ anymore,
\begin{equation}
    \hat{O} \ket{\psi} = \xi \ket{\psi} + \ket{\psi_{\perp}} \, , \quad \text{where } \langle\psi_{\perp} | \psi_{\perp} \rangle > 0 \, .
\end{equation}
We can, however, still define a positive semidefinite operator $\hat{\Delta}_{\alpha}$ as before,
\begin{equation}
    \hat{\Delta}_{\alpha} = \left( \hat{O}^{\dagger} - \xi^{*} \hat{I} \right) \left( \hat{O} - \xi \hat{I} \right) \, .
\end{equation}
The constructed auxiliary Hamiltonian still takes the same form
\begin{equation}
    \hat{H}_{\text{aux}}^{A} = \sum_{\alpha} \tilde{c}_{\alpha} \hat{\Delta}_{\alpha} \, , \quad  \tilde{c}_{\alpha} > 0 \, .
\end{equation}
For given $N_{1}$, optimizing the value of $\tilde{c}_{\alpha}$ to get the best upper bound is beyond the scope of our paper. Therefore, for demonstration purposes, we shall take the simplest case where $\tilde{c}_{a} = 1, \forall \alpha$. Since $N_{1}$ is handpicked, we choose to vary the value of $N_{1}$ to get a comparatively good upper bound of EE. We summarize the optimized result of the EE upper bounds of the scar states of the PXP model in Table~\ref{table:PXP} (see also  \cite{Yao:source_code}).

\begin{table*}[ht]
\begin{center}
\renewcommand{\arraystretch}{1.5}
\begin{tabular}{>{\centering}p{1.0cm}|>{\centering}p{1.2cm}>{\centering}p{0.8cm}|>{\centering}p{1.2cm}>{\centering}p{0.8cm}|>{\centering}p{1.2cm}>{\centering}p{0.8cm}|>{\centering}p{1.2cm}>{\centering}p{0.8cm}|>{\centering}p{1.2cm}>{\centering}p{0.8cm}}
   \hline \hline
   & \multicolumn{2}{c|}{$n=1$} & \multicolumn{2}{c|}{$n=2$} & \multicolumn{2}{c|}{$n=7$} & \multicolumn{2}{c|}{$n=30$}  & \multicolumn{2}{c}{$n=102$} \\ \hline
   $k = 3$ & 0.4038 & 10 & 1.6391 & 24 & 2.1314 & 12 & 2.6668 & 16 & 2.2858 & 19  \tabularnewline
   \hline
   $k = 4$ & 0.4031 & 81 & 1.5718 & 93 & 2.0201 & 76 & 2.0227 & 86 & 1.9797 & 89  \tabularnewline
   \hline
   $k = 5$ & 0.3868 &302 & 1.4835 & 330& 1.9588 &203 & 2.0090 &336 & 1.7840 & 315 \tabularnewline
   \hline \hline
\end{tabular}
\end{center}
\caption{Optimization results of entanglement entropy upper bounds of scar states (with eigenstate index $n$) of the PXP model in the zero momentum and inversion even sector. Results of the range $k=3, 4, 5$ of operator basis $\{\hat{L}_i\}$ are shown. In each cell, the left value is the optimized EE upper bound (rounded with 4 decimal points) and the right number is the corresponding optimal value of $N_{1} - N_{0}$. Due to the ``particle-hole" symmetry of the PXP model \cite{Turner:2018qs}, results from the only the first half, with eigenstate indices $n=1, 2, 7, 30, 102$, of all the scar states are shown.}
\label{table:PXP}
\end{table*}

As can be seen from either Fig.~4 in the main text or Table.~I here, the obtained optimal upper bound decrease with the increase of $k$. For $k=4, 5$, our bounds of the scar states are well separated from the thermal continuum, signifying weak ergodicity breaking \cite{Turner:2018wg}.

\section{Experimental Scheme to Measure the Correlation Matrix}
Here we will give a detailed description of the experimental measurement of correlation matrix $\mathcal{M}$ in the context of cold atom systems. We also give an explicit example of expanding the product of two operators.\\

To be concrete, we consider a spin-1/2 chain in optical lattices, which can be realized by ultracold atoms with two hyperfine states and large on-site repulsion. The system is prepared in the state $|\psi\rangle$, of which the entanglement property is to be bounded. As an example, we consider the measurement of the correlation matrix $\mathcal{M}$ of the subsystem $A$ that contains first two sites $n=1,2$. Let us choose our operator basis $\{ \hat{L}_{i} \}$ as $\{\hat{\sigma}_a^1, \hat{\sigma}_a^2, \hat{\sigma}_a^1\hat{\sigma}_b^2\}$ where $a,b=x,y,z$ labels the three Pauli operators, and add identity operator $\hat{I}$ to form the complete set of hermitian operator basis $\{ \hat{\tilde{L}}_{i} \} =  \{\hat{I}, \hat{\sigma}_a^1, \hat{\sigma}_a^2, \hat{\sigma}_a^1\hat{\sigma}_b^2\}$. To obtain the matrix element $\mathcal{M}_{ij}$, we need to measure $\langle \psi| \hat{L}_i^\dagger \hat{L}_j|\psi\rangle$ and $\langle \psi| \hat{L}_i|\psi\rangle$. However, it is straightforward to see that $\hat{L}_i^\dagger \hat{L}_j=\hat{L}_i \hat{L}_j$ is an element in $\{ \hat{\tilde{L}}_i\}$ using the fact that
\begin{equation}
    \hat{\sigma}^n_a \hat{\sigma}^n_b = \delta^{ab} \hat{I} + \sum_{c} i \epsilon_{abc} \hat{\sigma}_c^n,
\end{equation}
where $\epsilon_{abc}$ is the Levi-Civita symbol. In more general cases, when the set of operators $\{\hat L_i\}$ is orthogonal under the matrix trace $\mathrm{Tr} (\hat{L}_i \hat{L}_j)  \propto \delta_{ij}$, we can write
\begin{equation}
    \hat{L}_i \hat{L}_j=\sum_k C_{ij}^k \hat{\tilde{L}}_k, \qquad C_{ij}^k=\frac{\mathrm{Tr}(\hat{L}_i \hat{L}_j \hat{\tilde{L}}_k)}{\mathrm{Tr}(\hat{\tilde{L}}_k^2)}.
\end{equation}
As a result, knowing $\{ \langle \psi| \hat{\tilde{L}}_i|\psi\rangle \}$, or $\{ \langle \psi| \hat{L}_i|\psi\rangle \}$, is enough to determine the correlation matrix $\mathcal{M}$.
We first consider the measurement of $\hat{\sigma}^1_z$, $\hat{\sigma}^2_z$, and $\hat{\sigma}^1_z \hat{\sigma}^2_z$, which directly corresponds to the (correlation of) atom occupation in different hyperfine states. The technique of quantum gas microscopy can be directly applied to measure these quantities by fluorescence imaging  \cite{Bakr:2009aq,Sherson:2010sa,Weitenberg:2011ss,Cheuk:2015qg,Parsons:2015sr,Gross:2021qg}. Repeated measurements produce the probability distribution of spin states in the $z$ direction $p_{s_1 s_2}$ where the subscript $s_n=\uparrow,\downarrow$ labels the state of the $n$-th site. We have
\begin{equation}
\begin{aligned}
    \langle \psi|\hat{\sigma}^1_z|\psi\rangle&=\sum_{s_2}(p_{\uparrow s_2}-p_{\downarrow s_2}),\\
    \langle \psi|\hat{\sigma}^2_z|\psi\rangle&=\sum_{s_1}(p_{s_1\uparrow }-p_{s_1\downarrow }),\\
    \langle \psi|\hat{\sigma}^1_z\hat{\sigma}^2_z|\psi\rangle&=p_{\uparrow\uparrow }+p_{\downarrow\downarrow }-p_{\downarrow\uparrow }-p_{\uparrow\downarrow },
\end{aligned}
\end{equation}
which can be measured at the same time. Moreover, for systems with translation symmetry, one can use data on different sites to compute $p_{s_1 s_2}$ efficiently in a single-shot measurement.

For other operators, the measurement can be done with an additional Raman pulse that couples two hyperfine states. Such optical manipulation of a single site in optical lattices has been realized in experiments (for example, in \cite{Wang:2015ca,Xia:2015rb,Wang:2016sq}). We give a explicit example for the operator $\hat{\sigma}_x^1\hat{\sigma}_z^2$. By tuning the relative phase between Raman laser, we choose the coupling between two hyperfine states on the first site as $\delta \hat{H} = \Omega \hat{\sigma}_y^1$. We consider applying a $\pi/2$ pulse, which leads to an unitary evolution denoted by $\hat{U}$. Using the fact that $\hat{\sigma}_x^1=-\hat{U}^{\dagger}\hat{\sigma}_z^1\hat{U}$, we have
\begin{equation}
    \langle \psi |\hat{\sigma}_x^1\hat{\sigma}_z^2|\psi \rangle=-\langle \psi |\hat{U}^\dagger\hat{\sigma}_z^1\hat{\sigma}_z^2\hat{U}|\psi \rangle.
\end{equation}
As a result, after applying the $\pi/2$ pulse, the experimental protocol to measure $\hat{\sigma}_x^1\hat{\sigma}_z^2$ directly follows the protocol of measuring $\hat{\sigma}_z^1 \hat{\sigma}_z^2$. Other operators can be measured in the same spirit, and the generalization to $k$-local operators is straightforward.

\bibliography{references.bib}
\end{document}